\documentclass[12pt]{article}
\usepackage{amsmath,amssymb,amsfonts,amsthm}
\usepackage{graphicx}
\usepackage{hyperref}
\usepackage{tikz}
\usetikzlibrary{positioning}
\usepackage{pgfplots}
\usepackage{booktabs}
\usepackage{array}
\usepackage{xcolor}
\usepackage{bm}
\usepackage{geometry}
\pgfplotsset{compat=1.18}

\DeclareMathOperator{\Tr}{Tr}

\newcommand{\I}{\mathbb{I}}
\newcommand{\Pcal}{\mathcal{P}}
\newcommand{\Acal}{\mathcal{A}}

\newtheorem{proposition}{Proposition}
\newtheorem{remark}{Remark}

\begin{document}

\title{
Source-Aware Recovery and Security Bounds in Temporal-Mode Quantum Communication
}

\author{
Ali Vahedi$^{1}$ and Parsa Mahdavifar$^{1}$\\
Department of Physics, Kharazmi University, Tehran, Iran
}

\date{}
\maketitle

\begin{abstract}
We analyze a temporal-mode extension of the two-way LM05 quantum communication primitive, in which a message block is encoded into a fixed-weight binary occupation vector (the support) alongside classical ordering information. We demonstrate that applying an independent-slot recovery model to this correlated source is fundamentally inconsistent. For an ideal erasure channel, we derive the exact maximum-a-posteriori (MAP) recovery probability, showing that source-aware decoding outperforms independent guessing by orders of magnitude. We extend this analysis to a realistic threshold-detector model, quantifying the degradation caused by dark counts and inefficiency. Furthermore, we clarify the quantum-classical information split: in the ideal single-photon limit, preparation averaging renders Eve's state independent of the support, yielding zero Holevo information. We establish conditional security bounds using entropic uncertainty relations and identify photon-number-splitting as the primary practical vulnerability. These results provide a rigorous source-aware recovery benchmark and define the physical conditions under which the quantum layer offers a key-consumption advantage over classical constant-weight coding.
\end{abstract}

\section{Introduction}
\label{sec:intro}

Temporal-mode and time-bin encoding provide a natural way of mapping finite
alphabets onto optical degrees of freedom.  Temporal modes are now a mature
framework for quantum-optical state representation and
processing~\cite{Raymer2020}, with experimental progress in high-dimensional
temporal-mode sorting~\cite{Serino2025}, time-bin quantum
communication~\cite{Yu2025}, and temporal-mode multiplexing for quantum
networks~\cite{Xavier2025,Montaut2025}.  Such approaches are attractive
because a sequence of temporal modes can carry structured classical
information while retaining the physical operations of a qubit-based optical
protocol~\cite{Brecht2015,Fang2018,Xing2013}.  Two-way quantum communication
is particularly interesting in this respect because a single-mode operation
can be applied across a temporal block.

Here we consider a symbol-oriented construction based on the LM05
primitive~\cite{Lucamarini2005}.  An alphabet of size $N$ is mapped to $N$
temporal modes.  For a message containing $K$ distinct symbols, Bob applies
$iY$ to the corresponding $K$ modes and the identity to the remaining modes.
The quantum transmission therefore identifies an unordered set of symbol
positions; a classical description supplies the ordering needed to reconstruct
the message.  The security of LM05 and related two-way protocols depends
critically on source, channel, and attack
assumptions~\cite{Beaudry2013,Henao2015,Patra2025,Patra2026}; extending such
a primitive to multiple temporal modes therefore cannot be justified by a
tensor-product description alone.

The physical motivation for the quantum layer is as follows.  In the ideal
single-photon LM05 protocol, Eve's reduced state after Bob's operation is
$(\I/2)^{\otimes N}$ regardless of the support $P$, because the preparation
average over Alice's four random states erases all support-dependent
information (Sec.~\ref{sec:holevo}).  The $\log_2\binom{N}{K}$ bits of
support information are therefore protected by quantum uncertainty rather than
by a pre-shared key.  A classical constant-weight code sent over the same
fiber would require a one-time-pad key of equal length to achieve the same
confidentiality.  This is the genuine quantum resource exploited by the
construction.

However, this advantage is conditional on the single-photon, lossless,
side-channel-free idealization.  With loss, multi-photon pulses, or detector
imperfections, Eve's Holevo information rises above zero, and a
protocol-specific security proof is required to bound the leakage.  The
present paper does not provide such a proof; it provides the recovery
benchmark and the information-theoretic framework within which such a proof
would be formulated.

The tensor-product operation follows naturally from single-mode operations.
The important issue is instead the \emph{source model}.  Once a block is
constrained to contain exactly $K$ occupied positions, its binary occupation
variables are correlated.  Consequently, a recovery calculation that assigns
independent binary uncertainty to every lost mode does not describe the
stated message ensemble.

This distinction connects the present problem to the established literature
on constant-weight codes and support recovery.  Constant-weight codes have
been studied extensively in coding theory, including their decoding and
erasure-related constructions~\cite{Gadouleau2017,Sasidharan2024}.  Recent
work also develops explicit low-complexity constant-weight constructions and
maximum-likelihood decoders~\cite{Sasidharan2024ML}, motivated in part by
applications in DNA-based data storage~\cite{Erlich2017} and racetrack
memory~\cite{Dau2022}.  Our source contains the full set of $\binom{N}{K}$
weight-$K$ supports rather than a designed subcode, and the MAP expression
follows from posterior symmetry under erasures.

The central result is a benchmark that exposes the potentially large
difference between source-aware decoding and independent-slot guessing.  We
extend the analysis by incorporating a physically motivated detector model
that includes dark counts and detection inefficiency, and we derive the
modified MAP recovery probability.  We also analyze the quantum
information-theoretic aspects, deriving corrected bounds on Eve's information
about the symbol set and quantifying the limitations of protecting only the
classical ordering.

The remainder of the paper first defines the source and ordering model
(Sec.~\ref{sec:protocol}), then derives the exact ideal-erasure MAP benchmark
and compares it with the independent-slot model (Sec.~\ref{sec:loss}).  We
introduce a realistic detector model with dark counts and inefficiency
(Sec.~\ref{sec:detector}), analyze the quantum information-theoretic security
bounds (Sec.~\ref{sec:quantum_info}), and discuss quantum state
discrimination aspects.  The security implications are discussed in
Sec.~\ref{sec:security}, while final sections cover resource scaling
(Sec.~\ref{sec:resources}), implementation constraints
(Sec.~\ref{sec:implementation}), and discussion with a classical comparison
(Sec.~\ref{sec:discussion}).

\section{Temporal-Mode LM05 Construction}
\label{sec:protocol}

\subsection{Encoding}

Let $\Acal$ be an alphabet of size $N$ and let
\[
f:\Acal\rightarrow\{1,\ldots,N\}
\]
be a bijection.  For a message
\[
M=(s_1,\ldots,s_L),
\]
define its unordered symbol-position set by
\begin{equation}
\Pcal_M=\{f(s_1),\ldots,f(s_L)\},
\label{eq:PM}
\end{equation}
and let
\begin{equation}
K=\lvert \Pcal_M\rvert \leq \min(L,N)
\end{equation}
be the number of distinct symbols.

The ordering is represented by
\begin{equation}
\pi:\{1,\ldots,L\}\rightarrow\Pcal_M,
\qquad
\pi(j)=f(s_j).
\label{eq:pi}
\end{equation}

Alice prepares one qubit in each of $N$ temporal slots, using states from
\[
\{|0\rangle,|1\rangle,|+\rangle,|-\rangle\}
\]
and recording the preparation basis.  Bob applies
\begin{equation}
U_i=
\begin{cases}
iY, & i\in\Pcal_M,\\
\I, & i\notin\Pcal_M,
\end{cases}
\qquad
iY=
\begin{pmatrix}
0 & 1\\
-1 & 0
\end{pmatrix}.
\label{eq:Ui}
\end{equation}
For the chosen states, $iY|\psi\rangle$ is orthogonal to $|\psi\rangle$.

The block operation is
\begin{equation}
U_{\mathrm{total}}
=
\bigotimes_{i=1}^{N}U_i.
\label{eq:total_operation}
\end{equation}

Bob returns the temporal sequence to Alice, who measures in her preparation
bases.  In the ideal noiseless case, a surviving photon identifies whether
the corresponding operation was $\I$ or $iY$.  Bob then communicates the
ordering information $\pi$ over an authenticated classical channel.

\begin{figure}[ht]
\centering
\begin{tikzpicture}[node distance=7mm and 9mm,>=stealth]
\node[draw,rounded corners,align=center] (src)
  {Message block $M=(s_1,\ldots,s_L)$};
\node[draw,rounded corners,align=center,right=of src] (set)
  {Fixed-weight support $P=\Pcal_M$, $\lvert P\rvert=K$};
\node[draw,rounded corners,align=center,right=of set] (q)
  {Temporal-mode quantum block $\bigotimes_i U_i$};
\node[draw,rounded corners,align=center,below=of set] (ord)
  {Classical ordering $\pi$};
\node[draw,rounded corners,align=center,right=of q] (rec)
  {Source-aware MAP recovery};
\draw[->] (src) -- (set);
\draw[->] (set) -- (q);
\draw[->] (q) -- (rec);
\draw[->] (src) -- (ord);
\draw[->] (ord) -- (rec);
\end{tikzpicture}
\caption{Logical decomposition of the construction.  The quantum channel
carries the fixed-weight symbol support, while the classical channel carries
ordering information.}
\label{fig:architecture}
\end{figure}

\subsection{The fixed-weight source}

For the recovery analysis, define the binary occupation vector
\begin{equation}
X^N=(X_1,\ldots,X_N)\in\{0,1\}^N,
\qquad
X_i=\mathbf{1}_{\{i\in\Pcal_M\}}.
\end{equation}
For a fixed-$K$ ensemble,
\begin{equation}
\sum_{i=1}^{N}X_i=K.
\label{eq:fixed_weight}
\end{equation}
Under a uniform prior over all $K$-subsets,
\begin{equation}
\Pr(X^N=x)=\binom{N}{K}^{-1}
\quad\text{for }|x|=K,
\end{equation}
and hence
\begin{equation}
H(X^N)=\log_2\binom{N}{K}.
\label{eq:set_entropy}
\end{equation}
The fixed-weight condition is the key structural feature of the recovery
problem.  The coordinates of $X^N$ are not independent.

\begin{table}[ht]
\centering
\caption{Assumptions used by the ideal MAP benchmark and their physical
status.}
\begin{tabular}{p{0.34\linewidth}p{0.16\linewidth}p{0.38\linewidth}}
\toprule
Assumption & MAP model & Physical status \\
\midrule
Uniform fixed-$K$ source & Yes & Idealized benchmark prior \\
Known $K$ at the decoder & Yes & May require side information \\
Independent per-slot survival & Yes & Can fail with detector dead time
  or correlations \\
Perfect identification of erased slots & Yes & Idealized; real
  click/no-click data are noisy \\
No dark counts or false positives & Yes & Idealized \\
No timing jitter or inter-slot confusion & Yes & Idealized \\
Perfect single-photon preparation/measurement & Yes & Requires an explicit
  source/detector model \\
No adversarial parameter estimation & Yes & Security analysis remains
  incomplete \\
\bottomrule
\end{tabular}
\label{tab:assumptions}
\end{table}

\subsection{Classical ordering}

The number of classical bits needed to communicate the ordering depends on
the side information available to Alice.  If the exact multiplicities $m_s$
are known, the number of distinct orderings is
\begin{equation}
|\Pi(M)|=\frac{L!}{\prod_s m_s!},
\end{equation}
and an ideal fixed-length representation requires
\begin{equation}
C_{\mathrm{ord}}
=
\left\lceil
\log_2\frac{L!}{\prod_s m_s!}
\right\rceil.
\label{eq:ordering_cost}
\end{equation}

If Alice knows only the unordered multiplicity pattern, with $c_a$ symbols
having multiplicity $a$, then
\begin{equation}
|\Pi(M)|
=
\frac{K!}{\prod_a c_a!}
\frac{L!}{\prod_s m_s!}.
\label{eq:ordering_case2}
\end{equation}

If she knows only the distinct set, then
\begin{equation}
|\Pi(M)|=K!\,S(L,K),
\label{eq:ordering_case3}
\end{equation}
where $S(L,K)$ is a Stirling number of the second kind.

\subsection{Example: ``Hi Alice''}

For the 53-symbol alphabet consisting of uppercase letters, lowercase
letters, and space, the message ``Hi Alice'' has $L=8$ and $K=7$.  Its
distinct symbol positions are
\begin{equation}
\Pcal_M=\{1,8,29,31,35,38,53\}.
\label{eq:hi_set}
\end{equation}
The symbol $i$ occurs twice.  If Alice knows this multiplicity information,
the ordering cost is
\begin{equation}
C_{\mathrm{ord}}
=
\left\lceil\log_2\frac{8!}{2!}\right\rceil
=15\ \mathrm{bits}.
\label{eq:hi_order}
\end{equation}

The mapping can be made explicit.  Using indices $1$--$26$ for uppercase
letters, $27$--$52$ for lowercase letters, and $53$ for space, the eight
characters map as follows:
\begin{center}
\begin{tabular}{c|cccccccc}
Position & 1 & 2 & 3 & 4 & 5 & 6 & 7 & 8\\
\hline
Symbol & H & i & space & A & l & i & c & e\\
Index & 8 & 35 & 53 & 1 & 38 & 35 & 29 & 31
\end{tabular}
\end{center}
Thus the occupied set is
\[
\Pcal_M=\{1,8,29,31,35,38,53\},
\]
while the ordered message is represented by the rank sequence
\[
(2,5,7,1,6,5,3,4)
\]
after sorting the seven distinct symbols.

There are three useful levels of side information.  In Case~1, Alice already
knows that $i$ is the repeated symbol.  Then there are $8!/2!=20{,}160$
distinct orderings, so $C_{\mathrm{ord}}=15$ bits.  In Case~2, she knows
only that one of the seven symbols is repeated once.  There are $7$ possible
choices for the repeated symbol and therefore
\[
7(8!/2!)=141{,}120
\]
admissible orderings, requiring $18$ bits.  Case~3, in which only the
distinct set is known, gives the same count here because
\[
7!\,S(8,7)=5040\times 28=141{,}120,
\]
again requiring $18$ bits.

For the full seven-distinct-plus-one-repeat ensemble, the number of possible
messages is
\begin{equation}
|\mathcal M|=53\binom{52}{6}\frac{8!}{2!}.
\end{equation}
Hence
\begin{equation}
H(M)\approx44.31\ \mathrm{bits}.
\label{eq:hi_entropy}
\end{equation}
The $15$-bit ordering cost is conditional on the exact multiplicity
information being available to Alice; it is not the entropy of the complete
constrained ensemble.

\section{Exact MAP Recovery for the Fixed-Weight Source}
\label{sec:loss}

\subsection{Ideal erasure model}

Let
\begin{equation}
s=\eta\, T(d)^2
\label{eq:survival}
\end{equation}
be the probability that a prepared photon survives both optical passages and
is detected.  We first consider an ideal erasure model in which Alice knows
which temporal slots were erased.

Suppose $r$ occupied positions and $\ell$ unoccupied positions are erased.
The probability of this event is
\begin{equation}
\Pr(r,\ell)
=
\binom{K}{r}
\binom{N-K}{\ell}
s^{N-r-\ell}(1-s)^{r+\ell}.
\label{eq:erasure_pattern}
\end{equation}

Conditioned on the erased set, exactly $r$ of its $r+\ell$ positions are
occupied.  Under the uniform fixed-weight prior, all $\binom{r+\ell}{r}$
possibilities are equally likely.  Thus the MAP decoder succeeds with
probability
\begin{equation}
\frac{1}{\binom{r+\ell}{r}}.
\end{equation}

We therefore obtain the exact complete-set recovery probability
\begin{equation}
P_{\mathrm{MAP}}(N,K,s)
=
\sum_{r=0}^{K}
\sum_{\ell=0}^{N-K}
\binom{K}{r}
\binom{N-K}{\ell}
s^{N-r-\ell}(1-s)^{r+\ell}
\frac{1}{\binom{r+\ell}{r}}.
\label{eq:exact_map}
\end{equation}

\begin{proposition}
Equation~\eqref{eq:exact_map} is the optimal complete-set recovery
probability for the uniform fixed-weight source over the ideal erasure
channel.
\end{proposition}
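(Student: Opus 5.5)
The plan is to first show that the MAP decoder is optimal, and then compute its success probability by conditioning on the erasure pattern. Formally, the observation is the pair $Y=(E,\,x_{E^c})$. Here $E\subseteq\{1,\ldots,N\}$ is the erased set and $x_{E^c}$ is the vector of noiselessly revealed occupation values on the surviving slots. For any decoder $g$, the success probability is $\sum_y \Pr(Y=y)\Pr(X^N=g(y)\mid Y=y)$. This is at most $\sum_y \Pr(Y=y)\max_x \Pr(X^N=x\mid Y=y)$, with equality for the MAP rule. So the optimal complete-set recovery probability is $\mathbb{E}_Y[\max_x \Pr(X^N=x\mid Y)]$, and it remains to evaluate this quantity.

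The key step is the posterior computation. Under the erasure model of Table~\ref{tab:assumptions}, each slot survives independently with probability $s$, and survival does not depend on whether the slot is occupied. Hence $\Pr(E=e\mid X^N=x)=s^{N-|e|}(1-s)^{|e|}$ does not depend on $x$. By Bayes' rule, $\Pr(X^N=x\mid E=e,\,x_{e^c})$ is proportional to the uniform prior restricted to weight-$K$ vectors that agree with the observed $x_{e^c}$. The revealed coordinates fix how many ones lie outside $e$, so exactly $r=K-|x_{e^c}|$ ones must lie inside $e$. The posterior is therefore uniform over the $\binom{|e|}{r}$ completions, and its maximum equals $1/\binom{r+\ell}{r}$ with $\ell=|e|-r$. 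I expect this to be the step that needs the most care. One must verify explicitly that independence of survival from $X^N$ is what makes the likelihood factor cancel; without it, for example with occupation-dependent loss, the posterior would not be uniform and Eq.~\eqref{eq:exact_map} would fail. One should also note that the decoder's knowledge of $K$ is essential.

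Finally, take the expectation. By symmetry of the uniform prior, I will condition on a fixed support $P$, which leaves the average unchanged. The number of erasure patterns with $r$ erasures inside $P$ and $\ell$ erasures outside $P$ is $\binom{K}{r}\binom{N-K}{\ell}$, and each such pattern has probability $s^{N-r-\ell}(1-s)^{r+\ell}$. This reproduces Eq.~\eqref{eq:erasure_pattern}. Multiplying by the conditional success probability $1/\binom{r+\ell}{r}$ and summing over $0\le r\le K$ and $0\le \ell\le N-K$ gives Eq.~\eqref{eq:exact_map}. Combined with the optimality of MAP from the first paragraph, this completes the argument. Ties among maximizers do not affect the value, since all completions have equal posterior mass.
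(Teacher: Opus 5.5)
Your proof is correct and follows the same route as the paper: condition on the erasure pattern, note that the posterior is uniform over the $\binom{r+\ell}{r}$ ways of placing the $r$ missing occupied positions among the erased coordinates, take the reciprocal as the MAP success probability, and average over $(r,\ell)$. You also make explicit two points the paper leaves implicit: that the MAP rule is optimal for complete-set recovery, and that occupation-independent survival is what cancels the erasure likelihood in Bayes' rule. These tighten the same argument rather than change it.
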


\begin{proof}
For each erasure pattern, the receiver knows the unerased coordinates and the
total weight $K$.  The remaining uncertainty consists exactly of choosing
which $r$ of the $r+\ell$ erased coordinates are occupied.  Uniformity of the
prior makes these possibilities equiprobable, so the MAP success probability
is their reciprocal.  Averaging over the erasure patterns gives
Eq.~\eqref{eq:exact_map}.
\end{proof}

\subsection{Independent-slot benchmark}

If the fixed-weight constraint is ignored, an erased slot is assigned an
independent $1/2$ guessing probability.  This gives
\begin{equation}
P_{\mathrm{ind}}(N,s)
=
\left(\frac{1+s}{2}\right)^N.
\label{eq:pind}
\end{equation}

This expression is useful as a benchmark, but it does not represent the
recovery probability of the fixed-weight source.  The difference is
structural: the independent model treats every erased coordinate as a new
binary uncertainty, whereas the fixed-weight decoder uses the global
constraint $\sum_i X_i=K$.

\subsection{Limiting cases}

For $K=1$,
\begin{equation}
P_{\mathrm{MAP}}(N,1,s)
=
s+\frac{1-s^N}{N}.
\label{eq:k1}
\end{equation}

By complement symmetry,
\begin{equation}
P_{\mathrm{MAP}}(N,N-1,s)
=
P_{\mathrm{MAP}}(N,1,s).
\label{eq:complement}
\end{equation}

For $K=N$,
\begin{equation}
P_{\mathrm{MAP}}(N,N,s)=1.
\label{eq:kn}
\end{equation}

For $1\leq K\leq N-1$,
\begin{equation}
\lim_{s\rightarrow0}P_{\mathrm{MAP}}(N,K,s)
=
\binom{N}{K}^{-1}.
\label{eq:large_loss}
\end{equation}

For $s=1-\epsilon$ with $\epsilon\ll1$,
\begin{equation}
P_{\mathrm{MAP}}(N,K,1-\epsilon)
=
1-\frac{K(N-K)}{2}\epsilon^2+O(\epsilon^3).
\label{eq:small_loss}
\end{equation}
The absence of a first-order term follows from the fixed-weight constraint:
a single erased coordinate does not create an occupied-versus-unoccupied
ambiguity.

\subsection{Numerical comparison}

For $N=53$, Fig.~\ref{fig:pmap} compares three fixed-weight ensembles.

\begin{figure}[ht]
\centering
\begin{tikzpicture}
\begin{axis}[
width=0.92\linewidth,
height=7cm,
xlabel={$s$ (per-mode survival probability)},
ylabel={$P_{\mathrm{MAP}}$},
xmin=0.5,xmax=1.0,
ymin=0,ymax=1.02,
grid=both,
legend style={at={(0.03,0.97)},anchor=north west,font=\small}
]
\addplot+[mark=*,samples=100,domain=0.5:1.0]
  {x+(1-x^53)/53};
\addlegendentry{$K=1$ (same as $K=52$)}
\addplot+[mark=*] coordinates {
  (0.5,0.0108616)
  (0.6,0.0366725)
  (0.7,0.1035183)
  (0.8,0.2557157)
  (0.9,0.5686225)
  (0.95,0.8087908)
  (0.99,0.9863558)
  (1,1)
};
\addlegendentry{$K=7$}
\addplot+[mark=*] coordinates {
  (0.5,0.00000219)
  (0.6,0.00006437)
  (0.7,0.00142156)
  (0.8,0.0227569)
  (0.9,0.2412362)
  (0.95,0.6101799)
  (0.99,0.9703454)
  (1,1)
};
\addlegendentry{$K=26$}
\end{axis}
\end{tikzpicture}
\caption{Fixed-weight MAP recovery versus survival probability for $N=53$.}
\label{fig:pmap}
\end{figure}

For $N=53$ and $K=7$, representative values are given in
Table~\ref{tab:map_compare}.

\begin{table}[h]
\centering
\caption{Ideal-erasure MAP recovery for $N=53$, $K=7$.}
\begin{tabular}{ccc}
\toprule
$s$ & $P_{\mathrm{MAP}}$ & $P_{\mathrm{MAP}}/P_{\mathrm{ind}}$\\
\midrule
0.50 & $1.086\times10^{-2}$ & $4.55\times10^{4}$\\
0.70 & $1.035\times10^{-1}$ & $5.70\times10^{2}$\\
0.90 & $5.686\times10^{-1}$ & $8.62$\\
0.95 & $8.088\times10^{-1}$ & $3.26$\\
0.99 & $9.864\times10^{-1}$ & $1.29$\\
\bottomrule
\end{tabular}
\label{tab:map_compare}
\end{table}

Thus the independent-slot benchmark can underestimate complete-set recovery
by several orders of magnitude.  However, this comparison assumes error-free
identification of erased and detected modes.  A physical threshold receiver
produces noisy click/no-click records, so the ideal erasure flag is not
reliably available~\cite{Hadfield2009}.

\subsection{Posterior uncertainty beyond exact recovery}

Complete-set success is a stringent operational metric.  The same erasure
pattern also gives a simple posterior-entropy benchmark.  Conditional on an
erasure pattern with $r$ erased occupied positions and $\ell$ erased
unoccupied positions, the posterior is uniform over $\binom{r+\ell}{r}$
compatible supports.  Therefore
\begin{equation}
H(P\mid Y,E_{\mathrm{ch}},r,\ell)
=
\log_2\binom{r+\ell}{r},
\end{equation}
and averaging over erasure patterns gives
\begin{equation}
H(P\mid Y,E_{\mathrm{ch}})
=
\sum_{r=0}^{K}\sum_{\ell=0}^{N-K}
\Pr(r,\ell)\log_2\binom{r+\ell}{r}.
\label{eq:posterior_entropy}
\end{equation}

For $N=53$, $K=7$, and $s=0.9$, this gives
$H(P\mid Y,E_{\mathrm{ch}})\approx1.524$ bits, compared with the prior set
entropy $H(P)=\log_2\binom{53}{7}\approx27.200$ bits.

\subsection{Sensitivity to uncertainty in $K$}
\label{sec:k_uncertainty}

The MAP benchmark assumes the decoder knows $K$ exactly.  In practice, $K$
may be known only approximately.  We analyse the degradation when the true
weight is $K_0$ but the decoder considers
$K'\in\{K_0-\Delta,\ldots,K_0+\Delta\}$ with a prior $\pi(K')$.

Given $c$ detected clicks, the decoder first selects
\begin{equation}
\hat K
=\arg\max_{K'}\;
\pi(K')\,
\frac{\binom{N-c}{K'-c}(1-s)^{K'-c}}{\binom{N}{K'}},
\label{eq:k_hat}
\end{equation}
then guesses uniformly among the $\binom{N-c}{\hat K-c}$ compatible
$\hat K$-supports.  The exact success probability is
\begin{equation}
P_{\mathrm{MAP}}^{\Delta K}
=
\sum_{c=0}^{K_0}
\binom{K_0}{c}s^{c}(1-s)^{K_0-c}\;
\mathbf{1}[\hat K(c)=K_0]\;
\frac{1}{\binom{N-c}{K_0-c}}.
\label{eq:map_dk}
\end{equation}

\begin{proposition}
For the uniform prior $\pi(K')=1/(2\Delta+1)$ over
$K'\in\{K_0-\Delta,\ldots,K_0+\Delta\}$ with $\Delta=1$, and for
physically relevant survival probabilities ($0.1\lesssim s\lesssim1$), the
decoder selects $\hat K=K_0$ if and only if $c=K_0$ (all occupied modes
survive).  Under these conditions,
\begin{equation}
P_{\mathrm{MAP}}^{\Delta K=1}
=s^{K_0}.
\label{eq:map_dk1}
\end{equation}
\end{proposition}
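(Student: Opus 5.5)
The plan is to make the maximization in Eq.~\eqref{eq:k_hat} explicit and then read off which click counts $c$ lead to $\hat K=K_0$. With the uniform prior, $\pi(K')$ is constant on $\{K_0-1,K_0,K_0+1\}$ and drops out. So I will study the score
\[
g_c(K')=\frac{\binom{N-c}{K'-c}(1-s)^{K'-c}}{\binom{N}{K'}},
\]
which vanishes whenever $K'<c$.

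The key computation is the ratio of consecutive scores, which after cancelling factorials is
\[
\frac{g_c(K'+1)}{g_c(K')}=\frac{(1-s)(K'+1)}{K'+1-c},\qquad K'\ge c .
\]
Hence $g_c$ increases from $K'$ to $K'+1$ exactly when $c>s(K'+1)$. This gives a unimodality (threshold) description of $\hat K(c)$, and I would split into two cases.

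\emph{Case $c=K_0$.} Here $g_c(K_0-1)=0$, and the competition is between $K_0$ and $K_0+1$, with ratio $(1-s)(K_0+1)$. So $\hat K=K_0$ iff $(1-s)(K_0+1)<1$, i.e.\ $s>K_0/(K_0+1)$.

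\emph{Case $c<K_0$.} Selecting $K_0$ requires beating both neighbours, which means $sK_0<c<s(K_0+1)$. Showing that no integer $c$ satisfies this is what the ``only if'' direction needs.

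Once $\hat K(c)=K_0\iff c=K_0$ is established, Eq.~\eqref{eq:map_dk} collapses to the single term $c=K_0$. That term is $s^{K_0}\cdot\binom{N-K_0}{0}^{-1}=s^{K_0}$, which is Eq.~\eqref{eq:map_dk1}. Ties have measure zero in $s$ and can be broken toward $K_0$ without affecting the result.

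The main obstacle is that this ratio analysis does not obviously support the stated range $0.1\lesssim s\lesssim 1$, in either direction. For $c=K_0$ the estimator prefers $K_0+1$ unless $s>K_0/(K_0+1)$; for $K_0=7$ that threshold is $s>0.875$. Conversely, the interval $(sK_0,\,s(K_0+1))$ has length $s$ and can contain an integer $c<K_0$, for example $c=\lceil sK_0\rceil$ when $sK_0$ is not an integer. So the first thing I would do is check Eq.~\eqref{eq:k_hat} numerically for $N=53$, $K_0=7$ over a grid in $s$.

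If the discrepancy persists, I expect one of two fixes to be needed. Either the statement must be restricted to the high-survival regime $s>K_0/(K_0+1)$, with the additional requirement that no integer lies in $(sK_0,s(K_0+1))$ for $c<K_0$. Or the score in Eq.~\eqref{eq:k_hat} must be replaced by the correct pattern likelihood. That likelihood is $\binom{N}{K'}^{-1}s^{c}(1-s)^{K'-c}$ summed over compatible supports, and with it the same ratio argument would be repeated.

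In either case the final collapse to $s^{K_0}$ is routine once the characterization $\hat K(c)=K_0\iff c=K_0$ has been secured. The real content of the proof is identifying the exact $s$-range in which that characterization holds.
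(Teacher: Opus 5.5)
Your ratio $g_c(K'+1)/g_c(K')=(1-s)(K'+1)/(K'+1-c)$ is correct. It does more than cast doubt: it already disproves the proposition on most of the stated range, so no numerical check is needed.

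The paper's proof fails at exactly the two points you isolate:
\begin{itemize}
\item For $c<K_0$, it claims the ratio of $K'=K_0$ to $K'=K_0-1$, namely $(1-s)K_0/(K_0-c)$, is below one for all relevant $s$. In fact it is below one only when $c<sK_0$.
\item For $c=K_0$, it discards $K'=K_0-1$ via $\binom{N-K_0}{-1}=0$ but never compares against $K'=K_0+1$, whose score ratio to $K_0$ is $(1-s)(K_0+1)$.
\end{itemize}
Take $N=53$, $K_0=7$, $s=0.7$. The event $c=7$ is decoded as $\hat K=8$ (ratio $2.4$). The only click count with $\hat K=7$ is $c=5\in(4.9,5.6)$. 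Hence $P^{\Delta K=1}_{\mathrm{MAP}}=\binom{7}{5}(0.7)^5(0.3)^2/\binom{48}{2}\approx2.8\times10^{-4}$, not $0.7^7\approx8.2\times10^{-2}$. The $s=0.5$ row of Table~\ref{tab:k_uncertain} is wrong for the same reason: there $c=7$ gives ratio $4$ and is decoded as $\hat K=8$.

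The gap in your proposal is that you stop at a hedge, and both suggested repairs are off.
\begin{itemize}
\item The second repair is vacuous. Summing $\binom{N}{K'}^{-1}s^c(1-s)^{K'-c}$ over the $\binom{N-c}{K'-c}$ compatible supports gives exactly $s^c g_c(K')$. Since $s^c$ does not depend on $K'$, Eq.~\eqref{eq:k_hat} already is the click-pattern likelihood, and nothing changes.
\item The first repair carries a redundant condition. If $s>K_0/(K_0+1)$, then $sK_0>K_0-1+1/(K_0+1)>K_0-1\ge c$ for every $c<K_0$. So the interval $(sK_0,s(K_0+1))$ automatically contains no admissible $c$.
\end{itemize}
Your two failure modes are therefore governed by a single threshold. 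Ties aside, the provable statement is: $\hat K(c)=K_0\iff c=K_0$ holds if and only if $s>K_0/(K_0+1)$ (for $K_0=7$, $s>0.875$). In that case Eq.~\eqref{eq:map_dk} collapses to $s^{K_0}$ exactly as you describe.

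For $s<K_0/(K_0+1)$, the $c=K_0$ term is lost. Since the interval has length $s<1$, $P^{\Delta K=1}_{\mathrm{MAP}}$ is then at most the single term $\binom{K_0}{c}s^c(1-s)^{K_0-c}/\binom{N-c}{K_0-c}$, for the unique integer $c$ (if any) in $(sK_0,s(K_0+1))$.

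You had every ingredient for this. Committing to it turns the proposal into a proof of the corrected proposition and a disproof of the stated one.
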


\begin{proof}
For $c<K_0$, the posterior ratio between $K'=K_0$ and $K'=K_0-1$ is less
than unity for all physically relevant $s$, so $\hat K=K_0-1$ and the
decoder fails.  For $c=K_0$, the combinatorial factor
$\binom{N-K_0}{-1}=0$ eliminates the $K'=K_0-1$ candidate, and
$\hat K=K_0$ with probability one.  The support is then fully determined:
$\binom{N-K_0}{0}=1$.
\end{proof}

Table~\ref{tab:k_uncertain} compares the known-$K$ and uncertain-$K$
benchmarks for $N=53$, $K_0=7$, $K'\in\{6,7,8\}$.

\begin{table}[h]
\centering
\caption{MAP recovery with $K$ known exactly versus $K$ uncertain over
$\{6,7,8\}$ (uniform prior), for $N=53$, $K_0=7$.}
\begin{tabular}{cccc}
\toprule
$s$ & $P_{\mathrm{MAP}}$ (known $K$)
    & $P_{\mathrm{MAP}}^{\Delta K=1}=s^{7}$
    & Degradation \\
\midrule
0.50 & $1.086\times10^{-2}$ & $7.81\times10^{-3}$ & $-28\%$\\
0.70 & $1.035\times10^{-1}$ & $8.24\times10^{-2}$ & $-20\%$\\
0.90 & $5.686\times10^{-1}$ & $4.783\times10^{-1}$ & $-16\%$\\
0.95 & $8.088\times10^{-1}$ & $6.983\times10^{-1}$ & $-14\%$\\
0.99 & $9.864\times10^{-1}$ & $9.320\times10^{-1}$ & $-5.5\%$\\
\bottomrule
\end{tabular}
\label{tab:k_uncertain}
\end{table}

\begin{remark}
The collapse $P_{\mathrm{MAP}}^{\Delta K=1}=s^{K_0}$ is a consequence of the
ideal erasure model, in which unoccupied modes never click.  In the detector
model of Sec.~\ref{sec:detector}, dark counts provide additional information
that partially resolves the $K$-ambiguity.  A practical decoder should either
receive $K$ as authenticated side information or jointly estimate $K$ with
the support.
\end{remark}

\subsection{Distance dependence}

The distance dependence of the ideal-erasure benchmark is derived in
Appendix~\ref{app:distance}.  For $N=53$, $K=7$, $\eta=0.9$, and
$\alpha=0.2\;\mathrm{dB/km}$, solving $P_{\mathrm{MAP}}=0.01$ gives
$d_{1\%}\approx6.5\;\mathrm{km}$.  This value illustrates the parametric
sensitivity of the idealized model to propagation loss; it does not
represent an experimentally achievable range.

\section{Realistic Detector Model with Dark Counts and Inefficiency}
\label{sec:detector}

The ideal erasure model assumes perfect identification of erased modes.  In
practice, a threshold detector produces binary click/no-click outcomes with
dark counts and inefficiency.

\subsection{Detection model}

Consider a single temporal mode.  The detector produces a click with
probability:
\begin{equation}
P(\text{click}) =
\begin{cases}
p_{\mathrm{click}}^{(1)} = \eta_d s + (1-\eta_d s)p_d,
  & \text{if occupied},\\
p_{\mathrm{click}}^{(0)} = p_d,
  & \text{if unoccupied},
\end{cases}
\end{equation}
where $\eta_d$ is the detector efficiency, $p_d$ is the dark count
probability per temporal mode, and $s$ is the survival probability.  We
define
\begin{align}
\alpha_1 &= \eta_d s + (1-\eta_d s)p_d, &
\alpha_0 &= p_d, \\
\beta_1 &= (1-\eta_d s)(1-p_d), &
\beta_0 &= 1-p_d.
\end{align}

\subsection{MAP recovery with detector noise}

Let the observed click pattern be $Y\in\{0,1\}^N$.  For a candidate support
$S$ with $|S|=K$, the likelihood is:
\begin{equation}
P(Y|S) = \prod_{i=1}^N
\begin{cases}
\alpha_1^{Y_i} \beta_1^{1-Y_i}, & i \in S,\\
\alpha_0^{Y_i} \beta_0^{1-Y_i}, & i \notin S.
\end{cases}
\label{eq:detector_likelihood}
\end{equation}

The MAP decoder selects the $K$ coordinates with the largest weights
\begin{equation}
w_i = Y_i \log\frac{\alpha_1}{\alpha_0}
    + (1-Y_i)\log\frac{\beta_1}{\beta_0}.
\end{equation}

\subsection{Exact success probability}
\label{sec:detector_success}

Let $a$ be the number of occupied modes that produce clicks, and $b$ the
number of unoccupied modes that produce clicks.  The probability of observing
$(a,b)$ is:
\begin{equation}
\Pr(a,b) = \binom{K}{a} \alpha_1^a \beta_1^{K-a}
           \binom{N-K}{b} \alpha_0^b \beta_0^{N-K-b}.
\end{equation}

The MAP success probability is
\begin{equation}
P_{\mathrm{MAP}}^{\mathrm{det}}(N,K,\eta_d,p_d,s)
=
\sum_{a=0}^{K} \sum_{b=0}^{N-K}
\Pr(a,b)\;g_{\mathrm{det}}(a,b),
\label{eq:detector_success}
\end{equation}
where $g_{\mathrm{det}}(a,b)$ is the conditional success probability.
Since $\log(\alpha_1/\alpha_0)>\log(\beta_1/\beta_0)$ for physical
parameters, every click coordinate has a strictly larger weight than every
no-click coordinate.  Let $m=a+b$ be the total number of clicks.

\paragraph{Case $m<K$.}
The decoder selects all $m$ clicks and $K-m$ no-click coordinates.
Success requires $b=0$ and the $K-a$ no-click selections include all $K-a$
true occupied no-clicks:
\begin{equation}
g_{\mathrm{det}}(a,b)
=
\frac{\delta_{b,0}}{\binom{N-a}{K-a}},
\qquad m<K.
\label{eq:gdet_lt}
\end{equation}

\paragraph{Case $m=K$.}
The decoder selects all $K$ clicks.  Success requires $a=K$, $b=0$:
\begin{equation}
g_{\mathrm{det}}(a,b)
=
\delta_{a,K}\,\delta_{b,0},
\qquad m=K.
\label{eq:gdet_eq}
\end{equation}

\paragraph{Case $m>K$.}
The decoder selects $K$ of the $m$ clicks uniformly.  Success requires
$a=K$:
\begin{equation}
g_{\mathrm{det}}(a,b)
=
\frac{\delta_{a,K}}{\binom{m}{K}},
\qquad m>K.
\label{eq:gdet_gt}
\end{equation}

\begin{proposition}
In the limit $p_d\to0$, $\eta_d\to1$, the detector model reduces to the
ideal erasure model of Eq.~\eqref{eq:exact_map}.
\end{proposition}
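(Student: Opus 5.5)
The plan is to take the limit term by term in Eq.~\eqref{eq:detector_success}, and then show that both the outcome weights $\Pr(a,b)$ and the conditional success probabilities $g_{\mathrm{det}}(a,b)$ become those of an erasure channel. In that channel a slot counts as unerased exactly when it clicks. The argument therefore matches the two models pattern by pattern, as in the proof of the first Proposition, rather than matching closed forms. The matching of weights for unoccupied slots is the delicate point; see the last paragraph.

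\emph{Step 1 (limits of the detector parameters).} Setting $p_d\to0$ and $\eta_d\to1$ gives
\[
\alpha_1\to s,\quad \beta_1\to 1-s,\quad \alpha_0\to 0,\quad \beta_0\to 1.
\]
All quantities involved are finite polynomials in $(\eta_d,p_d)$, so the limit passes through the finite double sum. Because $\alpha_0^b\to\delta_{b,0}$, only $b=0$ survives, with
\[
\Pr(a,0)\to\binom{K}{a}s^a(1-s)^{K-a}.
\]
In words, false positives disappear, and a click certifies an occupied, surviving slot. This is exactly the noiseless statement after Eq.~\eqref{eq:Ui} that a surviving photon identifies its operation.

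\emph{Step 2 (conditional success).} With $b=0$ we have $m=a\le K$, so the case $m>K$, Eq.~\eqref{eq:gdet_gt}, never occurs. There are two remaining cases.
\begin{itemize}
\item If $a=K$, Eq.~\eqref{eq:gdet_eq} gives $g_{\mathrm{det}}=1$.
\item If $a<K$, Eq.~\eqref{eq:gdet_lt} gives $g_{\mathrm{det}}=1/\binom{N-a}{K-a}$.
\end{itemize}
In erasure language, set $r=K-a$ and let $\ell$ be the number of unoccupied slots that are non-informative (no click). The no-click set has size $r+\ell=N-a$, and exactly $r$ of its members are occupied. Then $g_{\mathrm{det}}=1/\binom{r+\ell}{r}$, which is precisely the MAP factor derived before Eq.~\eqref{eq:exact_map}. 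This identifies the decoders: the weight-ranking MAP rule of Sec.~\ref{sec:detector} becomes uniform guessing among the supports compatible with the unerased coordinates and the known $K$.

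\emph{Step 3 (matching erasure-pattern weights; main obstacle).} It remains to show that the limiting law of the no-click pattern coincides with Eq.~\eqref{eq:erasure_pattern}. For occupied slots this is immediate: each is unerased with probability $s$ independently, which gives the factor $\binom{K}{r}s^{K-r}(1-s)^r$.

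For unoccupied slots, the statement requires reading the ideal erasure model as Sec.~\ref{sec:loss} sets it up. There, erasure refers to loss of the probe photon. An unoccupied slot's photon survives with probability $s$ and, in the noiseless LM05 measurement, returns its ``$\I$'' result as an explicit detection event. That event is distinct from the dark-count click channel modelled by $\alpha_0$. I would therefore make this identification explicit: in the limit, the per-slot survival flag for unoccupied slots is carried by Alice's basis measurement, with law $\binom{N-K}{\ell}s^{N-K-\ell}(1-s)^\ell$. Under this identification, combining Steps 1--2 with this factor reproduces the summand of Eq.~\eqref{eq:exact_map} term by term.

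I expect this identification to be the hardest part. I would check it first against the limiting cases Eq.~\eqref{eq:k1} and Eq.~\eqref{eq:kn} to confirm that the readings agree. If the click record alone is used, the unoccupied factor instead degenerates to $\ell=N-K$, and the limit is the occupied-only special case of the same formula; that special case gives $s+(1-s)/N$ rather than Eq.~\eqref{eq:k1} at $K=1$.
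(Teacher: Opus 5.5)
Your Steps 1 and 2 are essentially the paper's whole proof. It sets $\alpha_1=s$, $\alpha_0=0$, $\beta_1=1-s$, $\beta_0=1$, gets $b=0$, and matches $g_{\mathrm{det}}=1/\binom{N-a}{K-a}$ with $1/\binom{r+\ell}{r}$ at $r=K-a$, $\ell=N-K$; the paper stops there. You correctly saw that this matches only conditional success factors, not pattern weights. However, Step 3 does not close that gap; it changes the model.

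In Sec.~\ref{sec:detector} the only observation is the click record $Y$, and an unoccupied slot clicks only through dark counts. At $p_d=0$ it therefore gives $Y_i=0$ with probability one, whether or not its photon survived. Nothing in \eqref{eq:detector_likelihood} carries a survival flag for unoccupied slots, so granting one via Alice's basis measurement adds information the model does not have. Your remark at the end of Step~1 is exactly where the two models part: after \eqref{eq:Ui} a surviving photon identifies $\I$ as well as $iY$, whereas here only occupied survivors click.

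Step 3 also contradicts Step 2. There $\ell$ counts no-click unoccupied slots, so it is deterministically $N-K$, and $g_{\mathrm{det}}$ depends only on $(a,b)$. If you attach weights $\binom{N-K}{\ell}s^{N-K-\ell}(1-s)^{\ell}$ but keep $g_{\mathrm{det}}$, the $\ell$-sum collapses to $1$ and you are back at
\[
\lim P_{\mathrm{MAP}}^{\mathrm{det}}=\sum_{a=0}^{K}\binom{K}{a}s^{a}(1-s)^{K-a}\binom{N-a}{K-a}^{-1}.
\]
Recovering \eqref{eq:exact_map} would require replacing the decoder as well, i.e.\ abandoning the Sec.~\ref{sec:detector} model rather than taking its limit.

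No identification inside that model can work, because the proposition as stated (same $s$ in \eqref{eq:exact_map}) is false; your own final sentence already shows this. At $K=1$ the limit is $s+(1-s)/N$, not $s+(1-s^N)/N$ from \eqref{eq:k1}; for $N=2$, $s=1/2$ these are $3/4$ and $7/8$. More generally, at $s=1-\epsilon$ the limit is $1-K\frac{N-K}{N-K+1}\epsilon+O(\epsilon^2)$. That first-order term is ruled out by \eqref{eq:small_loss} whenever $1\le K\le N-1$. The paper's proof has the same hole, since it checks only the $\ell=N-K$ conditional factor and never compares weights.

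What your Steps 1--2 do establish is a correct weaker statement. The noiseless detector model is the erasure model in which occupied slots survive with probability $s$ and unoccupied slots are always erased. Equivalently, it is \eqref{eq:exact_map} with the unoccupied-slot survival probability set to $0$. That click record is a deterministic function of the symmetric-erasure observation, so its MAP success is at most $P_{\mathrm{MAP}}(N,K,s)$. You should state this result rather than force agreement with \eqref{eq:exact_map}.
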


\begin{proof}
With $p_d=0$ and $\eta_d=1$: $\alpha_1=s$, $\alpha_0=0$, $\beta_1=1-s$,
$\beta_0=1$.  Hence $b=0$ always and $m=a$.  For $a<K$,
Eq.~\eqref{eq:gdet_lt} gives $g_{\mathrm{det}}=1/\binom{N-a}{K-a}$.
Setting $r=K-a$ and $\ell=N-K$ in Eq.~\eqref{eq:exact_map} gives
$1/\binom{N-a}{K-a}$, confirming agreement.  For $a=K$,
Eq.~\eqref{eq:gdet_eq} gives $g_{\mathrm{det}}=1$, matching the $r=0$ term.
\end{proof}

\subsection{Numerical results}

For $N=53$, $K=7$, $\eta_d=0.9$, $p_d=10^{-4}$:

\begin{table}[h]
\centering
\caption{Detector-aware MAP recovery for $N=53$, $K=7$, $\eta_d=0.9$,
$p_d=10^{-4}$.}
\begin{tabular}{ccc}
\toprule
$s$ & $P_{\mathrm{MAP}}^{\mathrm{det}}$
    & $P_{\mathrm{MAP}}^{\mathrm{ideal}}$ \\
\midrule
0.50 & $1.085\times10^{-2}$ & $1.086\times10^{-2}$ \\
0.70 & $1.034\times10^{-1}$ & $1.035\times10^{-1}$ \\
0.90 & $5.674\times10^{-1}$ & $5.686\times10^{-1}$ \\
0.95 & $8.082\times10^{-1}$ & $8.088\times10^{-1}$ \\
0.99 & $9.863\times10^{-1}$ & $9.864\times10^{-1}$ \\
\bottomrule
\end{tabular}
\label{tab:detector_compare}
\end{table}

The detector noise introduces only small corrections for $p_d=10^{-4}$
because the dark count probability is much smaller than the survival
probability.

\subsection{Dark-count dominated regime and transition}
\label{sec:dc_transition}

When dark counts dominate ($p_d \gg \eta_d s$), the observation becomes
nearly uniform.  In the limit $p_d \to 1$, the detector provides no
information and the recovery probability approaches the prior:
\begin{equation}
\lim_{p_d \to 1} P_{\mathrm{MAP}}^{\mathrm{det}} = \binom{N}{K}^{-1}.
\end{equation}
For $N=53$, $K=7$: $\binom{53}{7}^{-1}\approx7.56\times10^{-9}$.

Table~\ref{tab:dc_transition} shows the transition as $p_d$ increases, for
$N=53$, $K=7$, $\eta_d=0.9$, $s=0.9$.

\begin{table}[h]
\centering
\caption{Detector-aware MAP recovery probability versus dark count
probability $p_d$, for $N=53$, $K=7$, $\eta_d=0.9$, $s=0.9$.}
\begin{tabular}{ccccc}
\toprule
$p_d$ & $\alpha_1$ & $\alpha_0$ & Expected false
  clicks $\langle b\rangle$ & $P_{\mathrm{MAP}}^{\mathrm{det}}$ \\
\midrule
$10^{-4}$ & $0.810$ & $10^{-4}$ & $0.005$  & $5.67\times10^{-1}$ \\
$10^{-3}$ & $0.810$ & $10^{-3}$ & $0.046$  & $5.64\times10^{-1}$ \\
$10^{-2}$ & $0.812$ & $10^{-2}$ & $0.46$   & $\sim5.2\times10^{-1}$ \\
$10^{-1}$ & $0.829$ & $10^{-1}$ & $4.6$    & $\sim10^{-2}$ \\
$0.45$    & $0.896$ & $0.45$    & $20.7$   & $\sim10^{-8}$ \\
\bottomrule
\end{tabular}
\label{tab:dc_transition}
\end{table}

The transition occurs when the expected number of false clicks
$\langle b\rangle=(N-K)p_d$ becomes comparable to $K$.  For
$p_d\lesssim10^{-2}$, false clicks are rare and the detector model closely
tracks the ideal erasure benchmark.  For $p_d\gtrsim10^{-1}$, false clicks
outnumber the true occupied modes, and the success probability drops by
orders of magnitude.

\begin{remark}
The sharp transition between $p_d=10^{-2}$ and $p_d=10^{-1}$ reflects the
combinatorial structure: once $\langle b\rangle$ exceeds $K$, the decoder
must select $K$ modes from a pool dominated by false positives.  This
underscores the importance of low-dark-count detectors (e.g.\ SNSPDs with
$p_d\sim10^{-6}$--$10^{-4}$ per gate) for exploiting the source-aware
advantage.
\end{remark}

\subsection{Computational complexity of the MAP decoder}
\label{sec:complexity}

The MAP decoder does not enumerate the $\binom{N}{K}$ candidate supports.
It computes $N$ log-likelihood weights $w_i$, sorts them, and selects the
$K$ largest.  The total complexity is $O(N\log N)$, independent of $K$.
For $N=53$, this amounts to fewer than $400$ arithmetic operations,
executable in microseconds on any modern processor.  The decoder is
therefore compatible with real-time operation at megahertz block rates.

The $\binom{53}{7}\approx1.54\times10^{8}$ figure is the size of the
support space, not the decoder complexity.  Full enumeration would be needed
only for computing the posterior distribution
(e.g.\ Eq.~\eqref{eq:posterior_entropy}), which is an offline diagnostic.

\subsection{Detector dead time and inter-mode correlations}
\label{sec:dead_time}

The detector model assumes independent per-mode detection.  This is valid
under one of two architectural assumptions: (i)~a detector array with
independent per-mode detection, or (ii)~a single detector with dead time
$\tau_d$ shorter than the slot spacing $\tau_s$.

If $\tau_d>\tau_s$, detection events become correlated: a click in slot $i$
suppresses detection in subsequent slots within the dead-time window.  The
per-mode likelihood $\Pr(Y\mid S)$ in Eq.~\eqref{eq:detector_likelihood} no
longer factorises, and the MAP decoder must use the joint likelihood
\begin{equation}
P(Y\mid S)
=
\prod_{i=1}^{N}
P(Y_i\mid Y_{i-1},\ldots,Y_1;\,S),
\label{eq:corr_likelihood}
\end{equation}
which can be evaluated by dynamic programming in $O(NK)$ time but no longer
reduces to a simple weight-ranking rule.

For the numerical benchmarks in this paper, we assume either a detector
array or $\tau_d<\tau_s$.  Typical SNSPDs have
$\tau_d\sim10$--$50\;\mathrm{ns}$~\cite{Hadfield2009}; for slot spacings
$\tau_s\geq100\;\mathrm{ns}$, the independent-detection assumption is well
satisfied.

\section{Quantum Information-Theoretic Analysis}
\label{sec:quantum_info}

\subsection{Holevo information with preparation averaging}
\label{sec:holevo}

In the LM05 primitive, Alice prepares each mode in a state drawn uniformly
from $\{|0\rangle,|1\rangle,|+\rangle,|{-}\rangle\}$ and records the basis
privately.  Bob applies $U_i\in\{\I,iY\}$.  Eve does not know Alice's basis
choice.

For a single mode with Bob's operation $U_i$, the state available to Eve
(before Alice's basis revelation) is the preparation average
\begin{equation}
\bar\rho_i(U_i)
=
\frac{1}{4}\sum_{\psi\in\{0,1,+,-\}}
U_i|\psi\rangle\langle\psi|U_i^\dagger.
\end{equation}
Because $iY$ permutes the four preparation states up to global phases,
\begin{equation}
iY|0\rangle=-|1\rangle,\quad
iY|1\rangle=|0\rangle,\quad
iY|+\rangle=|{-}\rangle,\quad
iY|{-}\rangle=-|+\rangle,
\end{equation}
one finds
\begin{equation}
\bar\rho_i(\I)=\bar\rho_i(iY)=\frac{\I}{2}.
\label{eq:avg_state}
\end{equation}
Consequently, for \emph{every} support $S$ with $|S|=K$,
\begin{equation}
\bar\rho_S
=
\bigotimes_{i=1}^{N}\bar\rho_i(U_i)
=
\left(\frac{\I}{2}\right)^{\otimes N},
\label{eq:avg_block}
\end{equation}
and the Holevo information about the support vanishes identically:
\begin{equation}
\chi(P:E)
=
S(\bar\rho)
-
\frac{1}{\binom{N}{K}}\sum_S S(\bar\rho_S)
=
N-N=0.
\label{eq:chi_zero}
\end{equation}

\begin{remark}
Equation~\eqref{eq:chi_zero} is the single-photon security mechanism of
LM05: without knowledge of Alice's preparation basis, Eve's reduced state is
independent of Bob's operation, and therefore of the support.  The claim
$\chi=\log_2\binom{N}{K}$ would hold only if Eve knew every $|\psi_i\rangle$
in advance, which contradicts the protocol.
\end{remark}

The zero-Holevo result is an idealization.  In a physical implementation,
several mechanisms give Eve nonzero information: (i)~loss enables
loss-dependent attacks and photon-number-splitting on multi-photon pulses;
(ii)~weak coherent sources produce multi-photon pulses with probability
$O(\mu^2)$; (iii)~detector side channels leak information through the
classical detection record.  The EUR bound of Sec.~\ref{sec:eur} provides a
framework for bounding this leakage once parameter estimation is specified.

\subsection{Entropic uncertainty relation with quantum memory}
\label{sec:eur}

Security against coherent attacks can be bounded using entropic uncertainty
relations (EURs) with quantum memory~\cite{Berta2010,Coles2017}.  For $N$
independent qubits with complementary measurements $X$ and $Z$:
\begin{equation}
H(X^N|E)+H(Z^N|B)\geq N.
\label{eq:block_eur}
\end{equation}

In the LM05 protocol, Alice measures each returned mode in her preparation
basis.  In the noiseless single-photon limit, the measurement outcome in mode
$i$ is deterministic given Bob's operation, so the support $P$ is a
deterministic function of the measurement record $X^N$, and
\begin{equation}
H(P|E)\;\geq\;H(X^N|E)-H(X^N|P).
\label{eq:dp_ineq}
\end{equation}
In the ideal noiseless case $H(X^N|P)=0$, so $H(P|E)\geq H(X^N|E)$.

If parameter estimation gives a complementary-basis error rate $q_\perp$:
\begin{equation}
H(Z^N|B)\leq N\,h_2(q_\perp),
\end{equation}
and combining with \eqref{eq:block_eur} and \eqref{eq:dp_ineq}:
\begin{equation}
I(P;E)\leq
\max\!\Bigl\{0,\;
\log_2\binom{N}{K}-N\bigl[1-h_2(q_\perp)\bigr]\Bigr\}.
\label{eq:leakage_bound}
\end{equation}

\begin{remark}
Equation~\eqref{eq:leakage_bound} is a \emph{conditional} bound: it holds
only after parameter estimation has established $q_\perp$.  It does not
replace a full security proof, which must also handle finite-key effects,
Eve's quantum memory across blocks, and the specific two-way attack structure
of LM05.
\end{remark}

\subsection{State discrimination and support recovery}
\label{sec:discrimination}

The MAP recovery problem of Sec.~\ref{sec:loss} is equivalent to
minimum-error quantum state discrimination among $M=\binom{N}{K}$
equiprobable signal states $\{\rho_S\}$.  The optimal success probability is
\begin{equation}
P_{\mathrm{succ}}^{\mathrm{opt}}
=
\max_{\{\Pi_S\}}
\frac{1}{M}\sum_{S}\Tr(\rho_S\,\Pi_S),
\label{eq:mes}
\end{equation}
where the maximum is over POVMs $\{\Pi_S\}$ with $\Pi_S\geq0$,
$\sum_S\Pi_S=\I$.  The optimality conditions are the
Yuen--Kennedy--Lax (YKL) conditions~\cite{Yuen1975}; no closed-form solution
exists for general $M>2$.

For orthogonal states (ideal, lossless): $P_{\mathrm{succ}}^{\mathrm{opt}}=1$.
After the erasure channel, the effective states become non-orthogonal.  The
MAP recovery probability $P_{\mathrm{MAP}}(N,K,s)$ of
Eq.~\eqref{eq:exact_map} is the optimal discrimination success probability
for the measurement structure imposed by the erasure observation.

\subsection{State distinguishability and the tensor-product structure}
\label{sec:distinguish}

The block state after Bob's operation is
\begin{equation}
|\Psi_S\rangle=\bigotimes_{i=1}^{N}U_i|\psi_i\rangle,
\qquad |S|=K.
\end{equation}
This is a tensor-product state: there is no entanglement or coherence between
temporal modes.  The support information is encoded entirely in \emph{which}
modes carry $iY|\psi_i\rangle$ versus $|\psi_i\rangle$.

For two distinct supports $S\neq S'$, the overlap is
\begin{equation}
|\langle\Psi_S|\Psi_{S'}\rangle|
=
\prod_{i=1}^{N}|\langle\psi_i|U_i^\dagger U_i'|\psi_i\rangle|.
\label{eq:overlap}
\end{equation}
Since $S\neq S'$, there is at least one mode where $U_i\neq U_i'$, and
$|\langle\psi_i|iY|\psi_i\rangle|=0$ for every LM05 preparation state.
Therefore
\begin{equation}
\langle\Psi_S|\Psi_{S'}\rangle=0
\qquad(S\neq S').
\label{eq:ortho}
\end{equation}
The $\binom{N}{K}$ signal states are mutually orthogonal and perfectly
distinguishable by a mode-by-mode measurement in Alice's preparation bases.

\begin{remark}
The orthogonality in \eqref{eq:ortho} is a property of the
\emph{known-basis} states.  After preparation averaging
(Sec.~\ref{sec:holevo}), Eve's state is $(\I/2)^{\otimes N}$ independent of
$S$, so she cannot exploit this orthogonality.  The distinguishability is
available to Alice (who knows her bases) but not to Eve (who does not).
\end{remark}

\subsection{Physical justification: what the quantum layer provides}
\label{sec:physical_just}

In the ideal single-photon regime, Eq.~\eqref{eq:chi_zero} shows that Eve's
quantum state is independent of the support.  The $\log_2\binom{N}{K}$ bits
of support information are protected by quantum uncertainty (the no-cloning
theorem and the indistinguishability of non-orthogonal preparations), not by
a pre-shared classical key.  A classical constant-weight code transmitted
over the same fiber would require encrypting the entire support with a
one-time pad of $\log_2\binom{N}{K}$ bits.

The security relies critically on the two-way (prepare-and-return)
architecture.  Bob's operation $U_i$ is applied to Alice's quantum state,
which Eve cannot clone or measure without introducing disturbance.

The quantum security advantage is fragile: with channel loss, Eve can perform
loss-dependent attacks; with multi-photon pulses, PNS attacks let Eve copy a
photon; with detector imperfections, side-channel attacks can leak support
information.  In each case, $\chi(P:E)$ rises above zero, and the EUR bound
of Eq.~\eqref{eq:leakage_bound} quantifies the residual security margin.

The MAP analysis of Sec.~\ref{sec:loss} establishes the \emph{recovery
performance} of the legitimate receiver.  The security question (how much
information Eve has) is separate and requires the protocol-specific bounds of
Secs.~\ref{sec:holevo}--\ref{sec:eur}.

\subsection{Finite-key considerations}

For finite block lengths, the smooth min-entropy provides the relevant
security metric.  The fixed-weight constraint affects the min-entropy through
the source structure:
\begin{equation}
H_{\min}(X^N) \ge \log_2 \binom{N}{K} - \mathcal{O}\left(\sqrt{N}\right).
\end{equation}
The finite-key security proof would require bounding
$H_{\min}^\epsilon(X^N|E) \ge H_{\min}(X^N) - \text{leakage}$,
where the leakage depends on the error rates in the complementary basis.

\section{Security Implications}
\label{sec:security}

\subsection{Quantum and classical components}

Let $P=\Pcal_M$ denote the symbol set and let $E$ denote Eve's quantum side
information.  The quantum part has the Markov structure
$M\longrightarrow P\longrightarrow E$, and therefore
\begin{equation}
I(M;E)=I(P;E)
\label{eq:message_leakage}
\end{equation}
for the idealized model.  If the ordering is encrypted with an independent
perfect OTP:
\begin{equation}
I(M;E,C_{\mathrm{OTP}})
=
I(M;E)
=
I(P;E).
\label{eq:otp}
\end{equation}
Thus the OTP protects the classical-ordering confidentiality but does not
hide the quantum symbol set.

\subsection{Per-mode versus block statements}

Consider $N=4$, $K=2$, with the six weight-two strings equally likely, and
define $R_E=X_1\oplus X_2$.  Then $P(R_E=0)=1/3$, $P(R_E=1)=2/3$, so
$H(R_E)=h_2(1/3)\approx0.9183$ bits.  For each coordinate,
$I(X_i;R_E)=0$, while $I(X^4;R_E)=H(R_E)>0$.  This shows why a proof that
controls individual-mode leakage is not, by itself, a proof about an
arbitrary function of the entire block.

\subsection{Security status and open problems}

\begin{table}[h]
\centering
\caption{Security status of the temporal-mode symbol-set construction.}
\begin{tabular}{p{0.30\linewidth}p{0.25\linewidth}p{0.35\linewidth}}
\toprule
Question & Status & Required result \\
\midrule
Quantum symbol-set leakage & Open & Protocol-specific block security proof \\
Collective attacks & Open & Block-level entropy/uncertainty analysis \\
Coherent attacks & Open & EAT, de Finetti/post-selection, or direct proof \\
Finite-key security & Open & Explicit composable finite-key
  analysis~\cite{Tomamichel2011} \\
Classical ordering secrecy & Conditional & Authentication plus
  encryption/OTP \\
Cross-block composition & Open & Security with Eve's quantum memory \\
Standalone QSDC & Not established & Full-message composable secrecy and
  correctness proof \\
\bottomrule
\end{tabular}
\label{tab:security_status}
\end{table}

\subsection{Forward-channel attacks and the two-way structure}
\label{sec:forward}

At the forward-channel stage (Alice$\to$Bob), Bob's operation has not yet
been applied, so the quantum state carries zero information about the support
$P$.  Any unitary $V_F$ that Eve applies, together with any ancilla she
retains, is independent of $P$.

On the backward channel, the preparation average (Sec.~\ref{sec:holevo})
ensures that Eve's state is $(\I/2)^{\otimes N}$ regardless of $P$ in the
ideal single-photon case:
\begin{equation}
\bar\rho_i^{\mathrm{bwd}}
=
\frac{1}{4}\sum_{\psi}
V_B\,U_i|\psi\rangle\langle\psi|U_i^\dagger V_B^\dagger
=
\frac{\I}{2}
\quad\forall\;U_i\in\{\I,iY\}.
\end{equation}

The primary vulnerability is the photon-number-splitting (PNS) attack on
multi-photon pulses.  For a weak coherent source with mean photon number
$\mu$, the photon-number distribution is Poissonian, so the multi-photon
probability per pulse is
\begin{equation}
p_{\mathrm{multi}}
=
1-e^{-\mu}-\mu e^{-\mu}
\approx\frac{\mu^2}{2}
\quad(\mu\ll1),
\label{eq:multi_photon}
\end{equation}
where the approximation uses the Taylor expansion of the Poisson
distribution.  For a Fock (single-photon) source, $p_{\mathrm{multi}}=0$
exactly and the PNS attack is eliminated.  Over a block of $N$ modes:
\begin{equation}
P_{\mathrm{PNS}}
=
1-(1-p_{\mathrm{multi}})^N
\approx\frac{N\mu^2}{2}.
\label{eq:pns_block}
\end{equation}
For $N=53$ and $\mu=0.1$, $P_{\mathrm{PNS}}\approx0.265$.
Countermeasures include true single-photon sources, decoy-state
methods~\cite{Lo2005}, and measurement-device-independent architectures.

\section{Resource and Long-Message Scaling}
\label{sec:resources}

The physical quantum resource per block is $N$ prepared temporal modes.  The
symbol-set entropy is $H(P)=\log_2\binom{N}{K}$.

For long messages, let $M=M^{(1)}\Vert\cdots\Vert M^{(B)}$ be a sequence of
blocks.  The quantum resource scales as $N_{\mathrm{ph}}=BN$ prepared photons
before losses.  Under the Case~1 assumption, the ordering key consumption is
\begin{equation}
K_{\mathrm{OTP,total}}
=
\sum_{b=1}^{B}
\left\lceil
\log_2
\frac{L_b!}{\prod_s m_{b,s}!}
\right\rceil.
\label{eq:otp_total}
\end{equation}

The appropriate baseline is a conventional QKD-plus-encryption
architecture~\cite{Scarani2009}.  For the ``Hi Alice'' ensemble, the full
message entropy is about $44.31$ bits, while the support entropy is
$H(P)=\log_2\binom{53}{7}\approx27.20$ bits.  The quantum support carries a
large fraction of the message entropy.  If this support is not protected by a
block-level security proof, protecting only the classical ordering is
insufficient for full-message secrecy.

\section{Implementation Constraints}
\label{sec:implementation}

The MAP benchmarks abstract away detector details.  A physical implementation
must additionally specify timing jitter, slot spacing, detector dead time,
dark counts, synchronization, and thermal
drift~\cite{Hadfield2009,Agrawal2012}.

For independent timing contributions, a useful engineering budget is
\begin{equation}
\sigma_{\mathrm{tot}}^2
=
\sigma_{\mathrm{src}}^2+
\sigma_{\mathrm{det}}^2+
\sigma_{\mathrm{clk}}^2+
\sigma_{\mathrm{elec}}^2+\cdots.
\label{eq:timing_budget}
\end{equation}

If a single detector receives all temporal modes, its dead time limits the
accepted click rate.  More generally,
$t_{\mathrm{block}}=N(\tau_s+\tau_g)+t_{\mathrm{reset}}$, and
$R_{\mathrm{block}}=1/t_{\mathrm{block}}$.

Fiber delay depends on temperature:
\begin{equation}
\frac{d\tau}{dT}
=
\frac{L}{c}
\left(
\frac{dn}{dT}+n\alpha_{\mathrm{th}}
\right).
\label{eq:thermal}
\end{equation}
Static shifts can be calibrated; the relevant experimental requirement is the
residual drift between calibrations together with short-term timing jitter.

\section{Discussion}
\label{sec:discussion}

The central lesson is that the source model must be specified before a
multimode recovery calculation is factorized.  In the present construction,
the relevant object is a uniform constant-weight vector, not $N$ independent
binary variables.  Once that is recognized, the complete-set MAP problem has
the exact solution in Eq.~\eqref{eq:exact_map}.

The resulting discrepancy from independent-slot guessing can be very large.
For $N=53$, $K=7$, and $s=0.5$, the exact MAP probability is approximately
$1.09\times10^{-2}$, whereas the independent benchmark is approximately
$2.39\times10^{-7}$.

The security implications are similarly methodological.  The quantum layer
carries a symbol set and the classical layer carries ordering information.
Protecting the latter does not remove leakage about the former.  A complete
security theorem remains open.

\subsection{Comparison with classical constant-weight alternatives}
\label{sec:classical_compare}

To assess the quantum contribution, we compare the present construction with
a purely classical architecture that transmits the same constant-weight
support over an optical channel.

\paragraph{Classical architecture.}
Alice encodes the support $P$ as a classical binary vector and transmits it
using on/off keying (OOK).  The entire message is encrypted with a one-time
pad.  The support component alone requires
$\log_2\binom{53}{7}\approx27.2$ bits of key.

\paragraph{Quantum architecture (ideal).}
In the ideal single-photon LM05 construction, Eve's state is
$(\I/2)^{\otimes N}$ regardless of the support
(Eq.~\eqref{eq:chi_zero}), so the support is protected by quantum
uncertainty without any pre-shared key.  The key saving is
$\log_2\binom{N}{K}\approx27.2$ bits per block.

\begin{table}[h]
\centering
\caption{Comparison of quantum and classical architectures for the support
component ($N=53$, $K=7$).}
\begin{tabular}{p{0.28\linewidth}p{0.30\linewidth}p{0.30\linewidth}}
\toprule
Metric & Quantum (ideal LM05) & Classical (OOK + OTP) \\
\midrule
Support key cost
  & $0$ bits
  & $\log_2\binom{53}{7}\approx27.2$ bits \\
Ordering key cost
  & $C_{\mathrm{ord}}$ bits
  & included in message OTP \\
Error correction
  & MAP recovery (this paper)
  & Standard FEC or CW code \\
Security mechanism
  & Quantum uncertainty
  & Pre-shared OTP key \\
Loss tolerance
  & Degrades $\chi(P\!:\!E)$
  & Handled by FEC \\
Multi-photon vulnerability
  & PNS attack
  & Not applicable \\
\bottomrule
\end{tabular}
\label{tab:compare}
\end{table}

The quantum architecture offers a genuine reduction in pre-shared-key
consumption for the support, but this advantage erodes with loss,
multi-photon pulses, and detector imperfections.  The classical architecture,
while requiring a larger key, benefits from mature error-correction
technology and does not suffer from PNS attacks.

\section{Conclusion}

We revisited a temporal-mode extension of the two-way LM05 primitive from the
perspective of its actual source ensemble.  The tensor-product construction
maps a message block to a fixed-weight binary occupation vector together with
classical ordering information.  Because the occupation variables are
correlated, the commonly used independent-slot recovery expression does not
represent complete recovery of the stated source.

For a uniform fixed-weight prior and an ideal erasure channel, we derived the
exact MAP recovery probability.  For $N=53$ and $K=7$, this source-aware
result can exceed the independent-slot benchmark by more than four orders of
magnitude at moderate loss.

We extended the analysis to include physically motivated detector noise,
deriving the MAP recovery probability with dark counts and detection
inefficiency.  The source-aware advantage persists for typical detector
parameters but disappears in the dark-count dominated regime.  We analysed
the sensitivity to uncertainty in $K$, showing that a $\pm1$ uncertainty
degrades recovery by $5$--$28\%$, and demonstrated that the MAP decoder runs
in $O(N\log N)$ time, compatible with real-time operation.

The quantum information-theoretic analysis revealed that, in the ideal
single-photon LM05 protocol, preparation averaging over Alice's random basis
choices renders Eve's state independent of the support, yielding zero Holevo
information.  This is the fundamental security mechanism of the construction.
Entropic uncertainty relations provide conditional security bounds once
parameter estimation is specified.  We corrected the state-discrimination
analysis by replacing the inapplicable binary Helstrom bound with the
Yuen--Kennedy--Lax framework for $M$-ary discrimination, and we replaced the
incorrect inter-mode coherence discussion with a rigorous analysis of
tensor-product state distinguishability.

The analysis clarifies the quantum-classical split: an OTP can protect the
ordering information but cannot hide the quantum symbol set.  We showed that
the forward channel carries no support information and that the backward
channel is protected by the LM05 preparation average.  The multi-photon
vulnerability and PNS attack were identified as the primary practical threat.
A comparison with classical constant-weight coding quantified the
$\log_2\binom{N}{K}$-bit key-consumption advantage of the quantum layer in
the ideal regime.

A protocol-specific block security proof would be required before the
construction could support a standalone QSDC claim.  Cross-block composition,
coherent attacks, finite-key effects, and realistic detector models remain
open.

\appendix

\section{Derivation of the MAP expression}

For a fixed-weight source, let $S\subseteq\{1,\ldots,N\}$ be the occupied
set, with $|S|=K$.  Under the ideal erasure channel, each coordinate is
independently retained with probability $s$.  Suppose the erased set contains
$r$ positions from $S$ and $\ell$ positions from its complement.  The
receiver knows the unerased occupied positions and the total weight $K$.
There are $\binom{r+\ell}{r}$ compatible sets, all equally likely under the
uniform prior.  MAP success is the reciprocal.  Averaging over $r$ and
$\ell$ yields Eq.~\eqref{eq:exact_map}.

\section{Message-ensemble entropy}

For fixed symbol multiplicities $m_s$, the number of distinct length-$L$
strings is $|\mathcal M(m)|=L!/\prod_s m_s!$.  For the
seven-distinct-plus-one-repeat ensemble:
\begin{equation}
|\mathcal M|
=
53\binom{52}{6}\frac{8!}{2!},
\end{equation}
which gives approximately $44.31$ bits.

\section{Detector model derivation}

The detector model assumes independent per-mode threshold detection.  The
likelihood for an observation $Y$ given support $S$ is given by
Eq.~\eqref{eq:detector_likelihood}.  The MAP decoder selects the $K$
coordinates with the largest weights.  The success probability is obtained by
summing over all possible observations using
Eqs.~\eqref{eq:gdet_lt}--\eqref{eq:gdet_gt}.

\section{Holevo information with preparation averaging}
\label{app:holevo}

The single-mode preparation average is
$\frac{1}{4}\sum_{\psi}|\psi\rangle\langle\psi|=\I/2$.
Since $iY$ permutes the four preparation states (up to global phases), the
average after $iY$ is also $\I/2$.  Therefore
$\bar\rho_S=(\I/2)^{\otimes N}$ for every $S$, and
$\chi(P:E)=S(\bar\rho)-\frac{1}{\binom{N}{K}}\sum_S S(\bar\rho_S)=N-N=0$.

\section{Distance dependence of the ideal-erasure benchmark}
\label{app:distance}

With fiber attenuation $\alpha=0.2\;\mathrm{dB/km}$ and detector efficiency
$\eta=0.9$, the two-way survival probability is
$s(d)=0.9\times10^{-0.04d}$.  Solving $P_{\mathrm{MAP}}(53,7,s)=0.01$
numerically gives $s\approx0.637$, corresponding to
\begin{equation}
d_{1\%}
=
-\frac{10}{0.4}\log_{10}\!\left(\frac{0.637}{0.9}\right)
\approx6.52\;\mathrm{km}.
\end{equation}
This distance is a property of the idealized erasure model and should not be
interpreted as a protocol range.

\section{Entropic uncertainty relation derivation}

The EUR with quantum memory states $H(X|E)+H(Z|B)\geq\log_2(1/c)$, where
$c=\max_{x,z}|\langle x|z\rangle|^2$.  For qubits, $c=1/2$, giving
$H(X|E)+H(Z|B)\geq1$.  For $N$ qubits: $H(X^N|E)+H(Z^N|B)\geq N$.
Parameter estimation bounds $H(Z^N|B)\leq Nh_2(q_\perp)$, giving
Eq.~\eqref{eq:leakage_bound}.


\end{document}